\documentclass[11pt]{article}
\usepackage[margin=1in]{geometry}
\usepackage{amsmath, amssymb, amsthm}
\usepackage{mathtools}
\usepackage{algorithm}
\usepackage{algpseudocode}
\usepackage{hyperref}
\usepackage{graphicx}
\usepackage{cite}
 
\theoremstyle{plain}
\newtheorem{theorem}{Theorem}
\newtheorem{conjecture}{Conjecture}[section]
\newtheorem{lemma}[theorem]{Lemma}

\newtheorem{corollary}[theorem]{Corollary}
 
\theoremstyle{definition}
\newtheorem{definition}[theorem]{Definition}
\newtheorem{remark}[theorem]{Remark}

\DeclareMathOperator{\poly}{poly}
 
\title{Adversarial Configurations for the\\ReCom Transition Function}
\author{Micah Gold\\micahcgold@gmail.com}
\date{\today}
 
\begin{document}
\maketitle
 
\begin{abstract}
\textsf{ReCom} is a leading Markov Chain Monte Carlo algorithm for sampling balanced graph partitions in computational redistricting. At each step, its transition function proposes a new partition by merging two adjacent districts and if possible re-splitting the conjoined region. The transition function is efficient in practice, however, it is unknown whether it is guaranteed to run in polynomial time. In this report we exhibit an explicit family of $3$-partitions on planar square grid graphs from which \textsf{ReCom} requires an exponentially large expected number of steps to re-split the graph (even if we admit approximately balanced splits), showing that in the worst case \textsf{ReCom} does not run in polynomial time. Notably, this result implies that ReCom is not \emph{technically} rapidly mixing (if started from an adversarial configuration, ReCom requires exponentially many steps to reach the stationary distribution). 
\end{abstract}
 
\section{Introduction}
 
In the United States, members of the House of Representatives are elected to represent geographic districts. States are divided into districts through a map-drawing process known as redistricting. Redistricting is politically important. Shifting the boundaries of a district by even a few miles can alter the composition of its voter-base and sway election outcomes. Partisan gerrymandering, or strategically drawing districts to maximize political advantage, has haunted redistricting since the 19th century~\cite{masshist1812}. In recent years, states from Utah to Michigan have passed laws prohibiting partisan gerrymandering~\cite{campaignlegal2018} and courts have increasingly been asked to adjudicate whether proposed maps are fair ~\cite{brennancenter2021}. Unfortunately, quantifying unfairness is non-trivial. Fortunately, experts have converged on an intuitive and theoretically appealing approach: the \emph{ensemble method}. 

The ensemble method entails drawing a large sample of ``reasonable'' redistricting plans from an explicit distribution, and checking whether the proposed plan is a statistical outlier. The leading algorithm in this space is \textsf{ReCom} ~\cite{deford2021}, together with its provably reversible variant  \textsf{Reversible ReCom} \cite{cannon2022}.
 
\subsection{ReCom and Problem 2.8}

\textsf{ReCom} is a Markov Chain Monte Carlo algorithm used to sample redistricting plans \cite{deford2021}. Sampling from the \textsf{ReCom} steady-state approximates sampling from the distribution of spanning trees over the graph of census blocks (in the ``Reversible" variant of the algorithm used in this paper, which adds a Metropolis-Hastings acceptance probability for state transitions, the steady-state exactly converges to the spanning tree distribution \cite{cannon2022}). As a result, districts sampled from \textsf{ReCom} have nice properties. They are always contiguous (an effective requirement for real-world districts) and usually compact. 

\textsf{ReCom} maintains a balanced $k$-partition $\mathcal{P} = \{P_1, \dots, P_k\}$ of a planar graph $G$ whose vertices represent census blocks. At each step, the chain (i) selects two random pieces $P_i, P_j$, (ii) if they are adjacent, samples a uniform spanning tree $T$ of $P_i \cup P_j$, (iii) checks whether some edge $e \in T$ splits $T$ into two approximately balanced halves, and if so accepts a swap with a reversibility correction probability. Otherwise, the chain returns $\mathcal{P}$ unchanged. 
The full pseudocode for \textsf{Reversible ReCom} appears as Algorithm~\ref{alg:recom} in Appendix~\ref{app:recom}.

Empirically, \textsf{ReCom} mixes quickly and produces high-quality samples \cite{deford2021,cannon2022}, yet many of its theoretical properties remain unknown. Chief among them is whether a single step always runs in polynomial time. The bottleneck is (iii). \textsf{ReCom} only makes progress when the spanning tree sampled from the merged region $P_i \cup P_j$ happens to have a balanced cut-edge, so the running time is governed by how likely a uniform spanning tree of that region is to be splittable. If this probability is inverse-polynomial the step is efficient; if it is exponentially small the chain stalls on that move.

Charikar, Liu, Liu, and Vuong \cite{charikar2022} have conjectured that, on grids, splittable partitions are abundant under the spanning tree distribution:
\begin{conjecture}[Charikar, Liu, Liu, and Vuong \cite{charikar2022}]
For the $m \times n$ grid graph, the proportion of balanced $k$-partitions under the spanning tree distribution is at least $1/\poly(m,n)$ when $k = O(1)$.
\end{conjecture}
Cannon, Pegden, and Tucker-Foltz \cite{cannon2024} made the link to (iii) precise, showing that the proportion of balanced partitions is equivalent, up to polynomial factors, to the probability that a single uniform spanning tree can be split into balanced subtrees. They then proved this probability is inverse-polynomial for grids and a broad class of grid-like graphs. In other words, on a well-shaped region, a \textsf{ReCom} step is likely to succeed.

But \textsf{ReCom} does not get to choose well-shaped regions. It must merge whatever pieces the current partition provides. The positive result therefore leaves open whether the chain can be forced into a state where every available move is bad. Concretely, to stall the chain one would need a balanced partition in which \emph{every} pair of mergeable districts forms a region whose spanning trees are rarely splittable, so that no merge is likely to admit a balanced cut. 

Tucker-Foltz formalized the question as Problem~2.8 on the AIM open problem list for sampling connected balanced graph partitions:
\begin{quote}
\textbf{Problem 2.8.} Does the \textsf{ReCom} transition function run in polynomial time? Or can it get stuck in configurations where it is very unlikely a random tree will be splittable? Can we at least show that such configurations are exponentially unlikely? \cite{aimpl}
\end{quote}
In this paper we address the existence half of Problem~2.8. We answer the question ``do such stuck configurations exist?"
\subsection{Our Result}
Surprisingly, the answer is a resounding yes, even when \textsf{ReCom} is run on planar square grid graphs and even when we allow approximately balanced splits (see Theorem~\ref{thm:main}).

Informally, there exists an infinite family of $3$-partitions $\mathcal{P}_n = \{R_1^n, R_2^n, R_3^n\}$ of planar square grid graphs $G_n$ such that:
\begin{itemize}
    \item Each $R_i^n$ is connected and approximately population-balanced.
    \item The probability that the \textsf{ReCom} transition function finds an approximately balanced cut is exponentially small with respect to the dimensions of the grid.
\end{itemize}
Consequently, the expected number of steps required for \textsf{ReCom} to leave the state $\mathcal{P}_n$ is exponentially large with respect to the dimensions of the grid.

Notably, the existence of such an adversarial configuration means that  \emph{ReCom is not technically rapidly mixing} because there exists a starting configuration from which it cannot reach steady-state in a polynomial number of steps.
\subsection{Argument Outline}
 
The argument proceeds in two stages.
 
\begin{enumerate}
    \item First, we identify a family of graphs we call \emph{highway graphs} -- two large dense subgraphs connected by long $2 \times m$ ladders, with a third dense subgraph tethered to the middle of the highway by a thin tendril --- and prove that any spanning tree of a highway graph that contains two or more vertical rungs on each half of the highway admits no approximately balanced cut (Lemma~\ref{lem:two-rungs-bad}). Second, we use the effective-resistance machinery of \cite{procaccia2022} to show that a uniform spanning tree of the highway contains at least two rungs on each side of the highway except with exponentially small probability (Lemma~\ref{lem:few-rungs-rare}). We combine these two lemmas to show that any single highway graph is exponentially hard to cut in a balanced way.
    \item  We then exhibit a symmetrical 3-partitioned planar graph on which all possible merges produce a highway graph. Finally we demonstrate that \textsf{Reversible ReCom} will take an expected exponential number of steps in order to find a balanced cut and transition to its next state.
\end{enumerate}

\section{Preliminaries}
 
\subsection{Graph Partitions and Reversible ReCom}
 
We consider connected, undirected planar graphs $G = (V, E)$ in which every vertex represents a census unit of unit population. For a positive integer $k$ dividing $|V|$, a \emph{$k$-partition} of $G$ is a collection $\mathcal{P} = \{P_1, \dots, P_k\}$ of pairwise-disjoint vertex subsets covering $V$ such that each induced subgraph $G[P_i]$ is connected. We say $\mathcal{P}$ is \emph{$\varepsilon$-balanced} if
\[
\bigl| |P_i| - |V|/k \bigr| \leq \varepsilon |V|/k \qquad \text{for all } i \in \{1, \dots, k\}.
\]
Two pieces $P_i$ and $P_j$ are \emph{adjacent} if some edge of $G$ has one endpoint in each. 
 
The \textsf{Reversible ReCom} transition function ~\cite{cannon2022} on $\varepsilon$-balanced $k$-partitions of $G$ executes the following procedure at each step:
\begin{enumerate}
    \item Select an unordered pair $\{P_i, P_j\}$ uniformly at random.
    \item If $P_i$ and $P_j$ are not adjacent, return $\mathcal{P}$ unchanged.
    \item Otherwise, sample a uniform spanning tree $T$ of $G[P_i \cup P_j]$.
    \item Search $T$ for an edge $e$ whose removal splits $T$ into two components $P_i', P_j'$, each within $\varepsilon$ tolerance of size $|V|/k$.
    \item If no such $e$ exists, return $\mathcal{P}$ unchanged.
    \item If such an $e$ exists, accept the swap with probability $1/E(P_i', P_j')$, where $E(P_i', P_j')$ is the number of edges of $G$ crossing the cut.
\end{enumerate}
 
We refer to a step that returns $\mathcal{P}$ unchanged as a \emph{stalled step}. Our goal is to construct a state $\mathcal{P}$ from which every step stalls with exponentially high probability.
 
\subsection{Effective Resistance and Spanning Trees}
Borrowing the insight from ~\cite{procaccia2022}, we view an unweighted graph $H$ as an electrical network with a unit resistor on each edge. For any edge $e^* = \{a,b\} \in E(H)$, the \emph{effective resistance} $R_{ab}$ is the voltage required to drive one unit of current from $a$ to $b$ \cite{spielman2019sagt}. Three facts will suffice for our purposes.
 
\begin{lemma}[\cite{procaccia2022}, Lemma 2.1]
\label{lem:resistance-tree}
For any edge $e^* = \{a,b\} \in E(H)$, the effective resistance $R_{ab}$ equals the probability that $e^*$ appears in a uniformly random spanning tree of $H$.
\end{lemma}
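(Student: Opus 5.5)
The natural route goes through Kirchhoff's matrix-tree theorem and the weighted Laplacian. Let $L$ be the Laplacian of $H$, with all edge weights equal to one. Let $\tau(H)$ be the number of spanning trees, which equals any cofactor $\det L^{(v)}$ obtained by deleting the row and column of a vertex $v$. I would show that the tree probability and the effective resistance are both equal to the same ratio of determinants:
\[
\Pr[e^*\in T] \;=\; \frac{\tau(H/e^*)}{\tau(H)} \;=\; R_{ab}.
\]

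\textbf{First equality.} Spanning trees of $H$ that contain $e^*$ are in bijection with spanning trees of the contraction $H/e^*$, where $a$ and $b$ are merged into a single vertex. Contraction may create parallel edges but no loop, since we are contracting the edge itself. Hence
\[
\Pr[e^*\in T] \;=\; \frac{\tau(H/e^*)}{\tau(H)}.
\]

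\textbf{Second equality.} Ground the network at $b$, so the potential $\phi_b = 0$, and inject one unit of current at $a$. The remaining potentials solve the reduced system $L^{(b)}\phi = \mathbf{1}_a$, where $L^{(b)}$ is $L$ with the row and column of $b$ removed. Therefore
\[
R_{ab} \;=\; \phi_a \;=\; \bigl(L^{(b)}\bigr)^{-1}_{aa}.
\]
By Cramer's rule this diagonal entry equals the $(a,a)$ cofactor of $L^{(b)}$ divided by $\det L^{(b)}$.

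\textbf{Identifying the determinants.}
\begin{itemize}
\item The denominator is $\det L^{(b)} = \tau(H)$ by the matrix-tree theorem.
\item The numerator is $\det L^{(a,b)}$, the Laplacian with the rows and columns of both $a$ and $b$ deleted. This matrix is exactly the reduced Laplacian of $H/e^*$ with the merged vertex deleted. The key check is that the diagonal entries for the other vertices $v$ still count all edges incident to $v$, including those going to $a$ or $b$, which become edges to the merged vertex. So the matrix-tree theorem gives $\det L^{(a,b)} = \tau(H/e^*)$.
\end{itemize}
Combining the two computations gives $R_{ab} = \tau(H/e^*)/\tau(H) = \Pr[e^*\in T]$, as claimed.

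\textbf{Main obstacle and alternative.} The step needing the most care is the identification of $L^{(a,b)}$ with the reduced Laplacian of the contracted multigraph, in particular verifying the diagonal degrees. Everything else is bookkeeping.

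An alternative route is Kirchhoff's probabilistic formula for the current: the current along $e^*$ from $a$ to $b$ equals the fraction of spanning trees whose unique $a$–$b$ path uses $e^*$. That fraction is exactly the fraction of trees containing $e^*$. Since $e^*$ is a unit resistor, this current also equals the voltage drop $R_{ab}$. This route, however, needs Kirchhoff's flow theorem as its own input, so I would give the determinant argument as the main proof.
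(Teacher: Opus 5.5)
Your proof is correct. The paper gives no argument for this statement: it is imported from \cite{procaccia2022}, where it is the classical fact going back to Kirchhoff. So your determinant computation is a self-contained replacement for a citation rather than a variant of an argument in the paper, and it checks out. With $\phi_b=0$, the rows of $L\phi=\mathbf{1}_a-\mathbf{1}_b$ indexed by $V\setminus\{b\}$ give $L^{(b)}\phi=\mathbf{1}_a$. Cramer's rule then gives $\phi_a=\det L^{(a,b)}/\det L^{(b)}$. Deleting the rows and columns of $a$ and $b$ removes every entry that records an $a$--$b$ edge and leaves $\deg_H(v)$ on the remaining diagonal. Hence $L^{(a,b)}$ is indeed the reduced Laplacian of $H/e^*$ at the merged vertex.

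What your route buys over the bare citation is the intermediate identity $\Pr[e^*\in T]=\tau(H/e^*)/\tau(H)$. Equivalently, conditioning a uniform spanning tree on $e^*\in T$ (resp.\ $e^*\notin T$) yields a uniform spanning tree of $H/e^*$ (resp.\ $H\setminus e^*$). This is exactly what justifies Algorithm~\ref{alg:ust-resistance} and the paper's claim that the processing order does not matter, which the paper also leaves to the reference.

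One small correction. The paper applies the lemma to the intermediate graphs of Algorithm~\ref{alg:ust-resistance}, and these can be multigraphs. There a parallel copy of $e^*$ can become a loop under contraction, contrary to your parenthetical remark. This is harmless: loops lie in no spanning tree and never enter $L^{(a,b)}$, and $R_{ab}$ is then the effective resistance of the whole network including the parallel copies. So discard loops and your argument goes through unchanged.
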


\begin{lemma}[\cite{procaccia2022}, Lemma 2.4]
\label{lem:degree-bound}
For any pair of adjacent vertices $a, b$ in $H$ with unit resistors, $R_{ab} \geq 1/\deg_H(a)$.
\end{lemma}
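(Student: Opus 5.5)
The statement to prove is Lemma~\ref{lem:degree-bound}: for adjacent $a,b$ in a unit-resistance network $H$, $R_{ab} \ge 1/\deg_H(a)$. I will use the variational (Thomson / Dirichlet) characterization of effective resistance. Namely, $1/R_{ab}$ is the minimum of the energy
\[
\mathcal{E}(\phi)=\sum_{\{x,y\}\in E(H)} (\phi(x)-\phi(y))^2
\]
over all potentials $\phi: V(H)\to\mathbb{R}$ with $\phi(a)=1$ and $\phi(b)=0$. This is the effective conductance, obtained from Dirichlet's principle.

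\textbf{Step 1.} State Dirichlet's principle, so that $1/R_{ab} = \min_\phi \mathcal{E}(\phi)$ over potentials with boundary values $\phi(a)=1$, $\phi(b)=0$. Equivalently, any admissible $\phi$ gives $1/R_{ab} \le \mathcal{E}(\phi)$, i.e.\ $R_{ab}\ge 1/\mathcal{E}(\phi)$.

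\textbf{Step 2.} Choose the test potential $\phi = \mathbf{1}_{\{a\}}$, which is $1$ at $a$ and $0$ elsewhere. It satisfies $\phi(a)=1$ and $\phi(b)=0$. The only edges contributing energy are those incident to $a$, each contributing exactly $1$. So $\mathcal{E}(\phi)=\deg_H(a)$.

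\textbf{Step 3.} Conclude that $1/R_{ab}\le \deg_H(a)$, which gives the claim. Adjacency of $a$ and $b$ is not even needed for the inequality; it matters only for the interpretation via Lemma~\ref{lem:resistance-tree}.

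A combinatorial alternative, matching the spanning-tree view, avoids citing Dirichlet's principle. Every spanning tree contains at least one edge at $a$. By Lemma~\ref{lem:resistance-tree} applied to each edge $\{a,x\}$, summing gives $\sum_{x\sim a} R_{ax} \ge 1$. This alone does not single out the edge to $b$, though, so the energy argument is cleaner.

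The only potential obstacle is justifying Dirichlet's principle, which I would take as standard (cf.\ \cite{spielman2019sagt}). To be self-contained, one line suffices. For the true potential $v$ normalized so that $v(a)=1$, $v(b)=0$, the current out of $a$ equals $\mathcal{E}(v)=1/R_{ab}$. Harmonicity then gives orthogonality, $\mathcal{E}(\phi)=\mathcal{E}(v)+\mathcal{E}(\phi-v)\ge \mathcal{E}(v)$.
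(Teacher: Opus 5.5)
Your proof is correct. It differs from the paper only in that the paper gives no argument here and simply imports the bound from \cite{procaccia2022}. The usual textbook route is Rayleigh monotonicity. Shorting all of $V(H)\setminus\{a\}$ into one node can only lower $R_{ab}$, and after shorting, $a$ is joined to that node by $\deg_H(a)$ parallel unit resistors, giving exactly $1/\deg_H(a)$. Your test potential $\mathbf{1}_{\{a\}}$ is the unique admissible potential that is constant on $V(H)\setminus\{a\}$. So the two arguments are the same inequality, seen from the variational side and from the network side. Yours rests only on the orthogonality identity $\mathcal{E}(\phi)=\mathcal{E}(v)+\mathcal{E}(\phi-v)$, which you verify directly, whereas the shorting proof uses monotonicity as a black box (and monotonicity is itself normally proved via Thomson or Dirichlet). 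One further benefit is worth recording. Your energy computation works verbatim for multigraphs: each non-loop edge at $a$ contributes $1$ and loops contribute $0$, so $\mathcal{E}(\mathbf{1}_{\{a\}})\le\deg_H(a)$. That is exactly the setting in which the paper uses the lemma. In the proof of Lemma~\ref{lem:few-rungs-rare} it is applied to the contracted graphs $H'$ produced by Algorithm~\ref{alg:ust-resistance}, and these acquire parallel edges; for example, contracting two consecutive rungs leaves the merged nodes joined by two rail edges. Your remark that adjacency of $a$ and $b$ is not needed for the inequality is also correct.
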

 
We will also borrow the following mechanism for sampling uniform spanning trees, which makes the connection between effective resistance and uniform spanning trees algorithmically explicit \cite{procaccia2022}.
 
\begin{algorithm}[h]
\caption{Sample a uniform spanning tree of $H$ via effective resistance \cite{procaccia2022}.}
\label{alg:ust-resistance}
\begin{algorithmic}[1]
\State $T \gets \emptyset$
\While{$H$ has at least two vertices}
    \State pick any edge $e \in E(H)$
    \State compute $r \gets $ effective resistance of $e$ in $H$
    \State with probability $r$:
        \State \quad add $e$ to $T$ and contract $e$ in $H$
    \State otherwise:
        \State \quad delete $e$ from $H$
\EndWhile
\State \Return $T$
\end{algorithmic}
\end{algorithm}
 
We will use Algorithm~\ref{alg:ust-resistance} in Section~\ref{sec:lemma2} to bound the probability that few rungs appear in $T$. Notably, by Lemma~\ref{lem:resistance-tree}, the probability of a contraction/deletion sequence occurring equals the probability that a uniform spanning tree of the original graph $H$ contains all of the contracted edges and excludes all of the deleted edges. This means that the order in which edges are processed does not affect the joint distribution over outcomes \cite{procaccia2022}.
 
\section{The Highway Graph}
\label{sec:highway}
 
We now describe the central object of our construction: a planar graph $H_m$ whose uniform spanning trees almost never admit an approximately balanced cut.
 
\begin{definition}[Highway graph]
\label{def:highway}
Fix a positive integer $m$ (to be thought of as large) and a balance tolerance $\varepsilon \in [0, \frac{1}{12})$. The \emph{highway graph} $H_m$ is the planar graph on $N$ vertices assembled from the following pieces (see Figure~\ref{fig:highway}):
\begin{enumerate}
    \item \textbf{Three regions.} Three pairwise vertex-disjoint connected planar graphs $R_L$, $R_R$, and $R_T$, each satisfying the size bound
    \[
    |V(R_i)| \;<\; \bigl(\tfrac{1}{2} - \varepsilon\bigr)N - \bigl(2(2m+1) + 1\bigr), \qquad i \in \{L, R, T\}.
    \]
    \item \textbf{The ladder.} A $2 \times (2m+1)$ grid graph $L$, vertex-disjoint from $R_L \cup R_R \cup R_T$. We label its columns $1, 2, \dots, 2m+1$ from left to right, and refer to columns $1, \dots, m$ as the \emph{left half}, columns $m+2, \dots, 2m+1$ as the \emph{right half}, and column $m+1$ as the \emph{central column}. The $2m+1$ vertical edges of $L$ are called \emph{rungs} and form the set~$\mathcal{R}$; the remaining $2 \cdot 2m = 4m$ edges along the top and bottom rows are called \emph{rail edges}. For a spanning tree $T$ of $H_m$ and each side $s \in \{L, R\}$, we write $\mathcal{R}_T^s$ for the set of rungs of $T$ lying in the $s$-half of the ladder.
    \item \textbf{The tendril.} An additional vertex $v_T \notin V(R_L) \cup V(R_R) \cup V(R_T) \cup V(L)$, joined by a single edge to the bottom vertex of column $m+1$ of the ladder and by a single edge to a boundary vertex of $R_T$.
\end{enumerate}
The left end column (column $1$) of the ladder is joined to $R_L$, and the right end column (column $2m+1$) to $R_R$, by a single edge from each of the two end-column vertices to a boundary vertex of the corresponding region. In particular, like every interior ladder vertex, each end-column vertex is incident to exactly one rung and two further edges, so every vertex of the ladder except the bottom vertex of the central column (which additionally carries the tendril) has degree $3$ in $H_m$.
We refer to $L \cup \{v_T\}$ as the \emph{highway corridor}, and we write $\mathcal{R}$ for the set of rungs.
\end{definition}
 
\begin{figure}[h]
    \centering
    \includegraphics[width=0.85\textwidth, trim=0 0 0 80pt, clip]{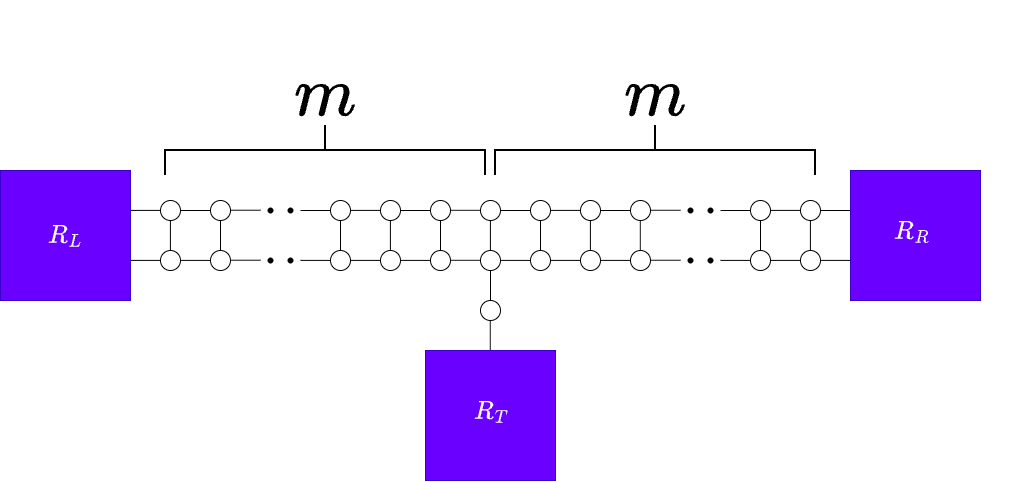}
    \caption{The highway graph $H_m$. Three regions $R_L, R_T, R_R$ are connected by a horizontal $2 \times (2m+1)$ ladder, with a single vertex $v_T$ tendril dropping from the bottom of the central column to $R_T$.}
    \label{fig:highway}
\end{figure}
 
\begin{remark}[Region masses]
\label{rem:block-mass}
The size bound on each region in Definition~\ref{def:highway} is calibrated so that no single region can occupy a balanced half of $H_m$. Throughout this section we measure the balance tolerance relative to the target half-size $N/2$, so that an $\varepsilon$-balanced 2-partition $\{A, A^c\}$ of $H_m$ requires $|A|, |A^c| \in [(\tfrac{1}{2} - \varepsilon)N,\, (\tfrac{1}{2} + \varepsilon)N]$. Even appending the entire highway corridor $L \cup \{v_T\}$, which has $|V(L)| + 1 = 2(2m+1) + 1$ vertices, to a single region falls short of the lower endpoint of this interval:
\[
|V(R_i)| + |V(L)| + 1 \;<\; \Bigl[\bigl(\tfrac{1}{2} - \varepsilon\bigr)N - \bigl(2(2m+1) + 1\bigr)\Bigr] + \bigl(2(2m+1) + 1\bigr) \;=\; \bigl(\tfrac{1}{2} - \varepsilon\bigr)N.
\]
Hence even appending the entire highway corridor to a single region does not yield enough mass for a balanced half (and this holds for every $N$ and every $\varepsilon \geq 0$, with no large-$N$ side condition). It follows that every $\varepsilon$-balanced 2-partition of $H_m$ places at least a portion of two of the three regions $R_L, R_R, R_T$ on the same side of the cut.
 
Moreover, because $R_T$ contains a single entrance vertex $v_T$, a balanced cut can never split $R_T$ (as one of the produced regions would be contained entirely in $R_T$ and therefore have less than half the mass).
 
This means then that any $\varepsilon$-balanced 2-partition of $H_m$ must place at least a portion of $R_L$ and $R_R$ on the same side. In other words, any partition cannot have $R_L$ completely intact on one side and $R_R$ completely intact on the other.
\end{remark}

\section{Combinatorial Obstruction: Two Rungs Forbid a Balanced Cut}
 
In this section, we prove the following lemma:
 
\begin{lemma}\label{lem:two-rungs-bad}
If $T$ is a spanning tree of $H_m$ such that $|\mathcal{R}_T^L| \geq 2$ and $|\mathcal{R}_T^R| \geq 2$, then there is no edge $e \in T$ whose removal splits $T$ into two $\varepsilon$-balanced components.
\end{lemma}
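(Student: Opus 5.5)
The plan is a case analysis on where the removed edge $e$ lies, combined with the mass accounting of Remark~\ref{rem:block-mass}. Fix a spanning tree $T$ with $|\mathcal{R}_T^L|\ge 2$ and $|\mathcal{R}_T^R|\ge 2$, fix $e\in T$, and let $A$ and $A^c$ be the two components of $T-e$.

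\textbf{Step 1 (what a balanced cut needs).} By Remark~\ref{rem:block-mass}, if $\{A,A^c\}$ is $\varepsilon$-balanced then $R_T$ lies entirely on one side, and both sides must meet $R_L\cup R_R$. Together with the size bound on each region, this forces one of two shapes:
\begin{enumerate}
\item[(a)] one of $R_L,R_R$ is itself split by $e$; or
\item[(b)] $R_L$ and $R_R$ are each intact but on opposite sides.
\end{enumerate}
Shape (b) is excluded by the last sentence of the Remark. So it suffices to show that \emph{no edge of $T$ splits $R_L$ (or $R_R$) into pieces lying in different components of $T-e$ while still leaving a side of size at least $(\tfrac12-\varepsilon)N$}.

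\textbf{Step 2 (structural claim).} The role of the two rungs is to control how $T$ routes through the corridor. I expect to prove: if $T$ has at least two rungs in the left half, then $T$ contains at most one of the two attachment edges joining column $1$ to $R_L$ in a way that carries $R_L$-to-$R_L$ connectivity through the ladder. The argument is a cycle-counting one. If $T[R_L]$ were disconnected and reconnected via a path through the left half of the ladder (out along one rail, back along the other), that path would use a rung $r$. A second rung $r'$ in the left half, together with the rail edges of $T$, then either creates a cycle or forces the rail between $r$ and $r'$ to be broken. In the latter case the path from $R_L$ to the centre cannot leave through the left half, contradicting connectivity to $R_R\cup R_T$. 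The upshot is that $T[R_L]$ is essentially connected and attached to the rest of $T$ by a single tree path. Hence any $e$ splitting $R_L$ cuts off a subset of $R_L$ (plus possibly a few corridor vertices) from everything else. The symmetric statement holds for $R_R$.

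\textbf{Step 3 (conclude).} Suppose $e$ splits $R_L$. By Step 2, one component of $T-e$ is contained in $V(R_L)\cup V(L)\cup\{v_T\}$. By the Remark's inequality this component has fewer than $(\tfrac12-\varepsilon)N$ vertices, so the cut is not balanced. The same holds for $R_R$. Combined with Step 1, no balanced edge exists.

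\textbf{Main obstacle.} The hard part is Step 2: making precise how two rungs on one side prevent $T$ from threading a region's connectivity through the ladder. This matters because otherwise a single cut in the ladder could separate a large chunk of $R_L$, joined with $R_T$ and $R_R$, from the rest. I would first try a direct argument on the unique $T$-path between the two column-$1$ attachment points, analysing which rail edges of the left half $T$ must omit given two rungs. If that becomes messy, I would instead argue via the fundamental cycles of the two rungs in $L$.
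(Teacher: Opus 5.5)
Your Steps 1 and 3 are sound and match the paper's reduction. The structural claim you plan to prove in Step 2 is false, however, so the argument breaks exactly at the point you flagged. Write $t_j,b_j$ for the top and bottom vertices of column $j$. Take $T$ to contain both column-$1$ attachment edges, with $T[R_L]$ a forest of two trees separating the two attachment endpoints. Add the ladder edges $t_1t_2$, $b_1b_2$, $t_2b_2$, $t_2t_3$, $b_2b_3,\dots,b_mb_{m+1}$, $t_4b_4$ and $t_4t_5,\dots,t_mt_{m+1}$, but not $t_3t_4$. For $m\ge 4$ this extends to a spanning tree of $H_m$ with two right-half rungs. It has two left-half rungs (columns $2$ and $4$), yet $T[R_L]$ is disconnected, and its two pieces are joined through the ladder along $t_1t_2b_2b_1$.

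The faulty inference is ``in the latter case the path from $R_L$ to the centre cannot leave through the left half''. The second rung forces only \emph{one} rail to be broken between $r$ and $r'$ (here at $t_3t_4$). The other rail still carries $R_L$ to the centre. So the ``hence'' in Step 2 is unsupported. The conclusion you want there is true, but not for this reason. Your fallbacks still aim at a property of $T$ alone, which is the wrong target.

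The missing idea is to argue relative to the hypothetical cut itself. Suppose $e$ splits $R_L$ and both components $A,A^c$ of $T-e$ have at least $(\tfrac12-\varepsilon)N$ vertices. Since $|V(R_L)|+|V(L)|+1<(\tfrac12-\varepsilon)N$, each component meets $R_L$ and also meets $R_R\cup R_T$. Each component is a subtree, so each contains a path from $R_L$ to $R_R\cup R_T$. Every such path in $H_m$ visits all of columns $1,\dots,m+1$: $R_L$ is exited only at column $1$, and $R_R$ and $R_T$ are entered only via column $2m+1$ or via $b_{m+1}v_T$. Because $A$ and $A^c$ are disjoint and each column has only two vertices, each of these columns has exactly one vertex on each side. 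Hence no left-half rung has both endpoints in $A$ or both in $A^c$, so every left-half rung of $T$ must be the cut edge $e$. This contradicts $|\mathcal{R}_T^L|\ge 2$. This is the paper's argument (a rung inside either side would ``block'' the other side from crossing). It gives directly the statement you wanted at the end of Step 2, that one component lies in $V(R_L)\cup V(L)\cup\{v_T\}$, with no claim about $T[R_L]$ needed.
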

 
\begin{proof}
Suppose, for contradiction, that the removal of some edge $e \in T$ splits $T$ into two components $T_1$ and $T_2$ that form an $\varepsilon$-balanced 2-partition of $V(H_m)$.
 
Note that from the previous remark at least one of $R_L$ and $R_R$ is split between $T_1$ and $T_2$ because both cannot be intact. Without loss of generality, assume $R_L$ is split into $a_1 =  V(R_L) \cap T_1$ and $a_2 = V(R_L) \cap T_2$ where $a_1$ and $a_2$ are non-empty and $a_1 \cup a_2 = V(R_L)$. 
 
Now, note that because $T_1$ and $T_2$ are balanced partitions, they both traverse the highway from $R_L$ to other regions. The highway is two vertices thick. Clearly $T_1$ and $T_2$ cannot both enter the highway from the same vertex because they are not connected. Without loss of generality let $T_1$ enter via the bottom vertex and $T_2$ enter via the top vertex. 
 
\begin{figure}[H]
    \centering
    \includegraphics[width=0.60\textwidth, , trim=5 5 5 7pt, clip]{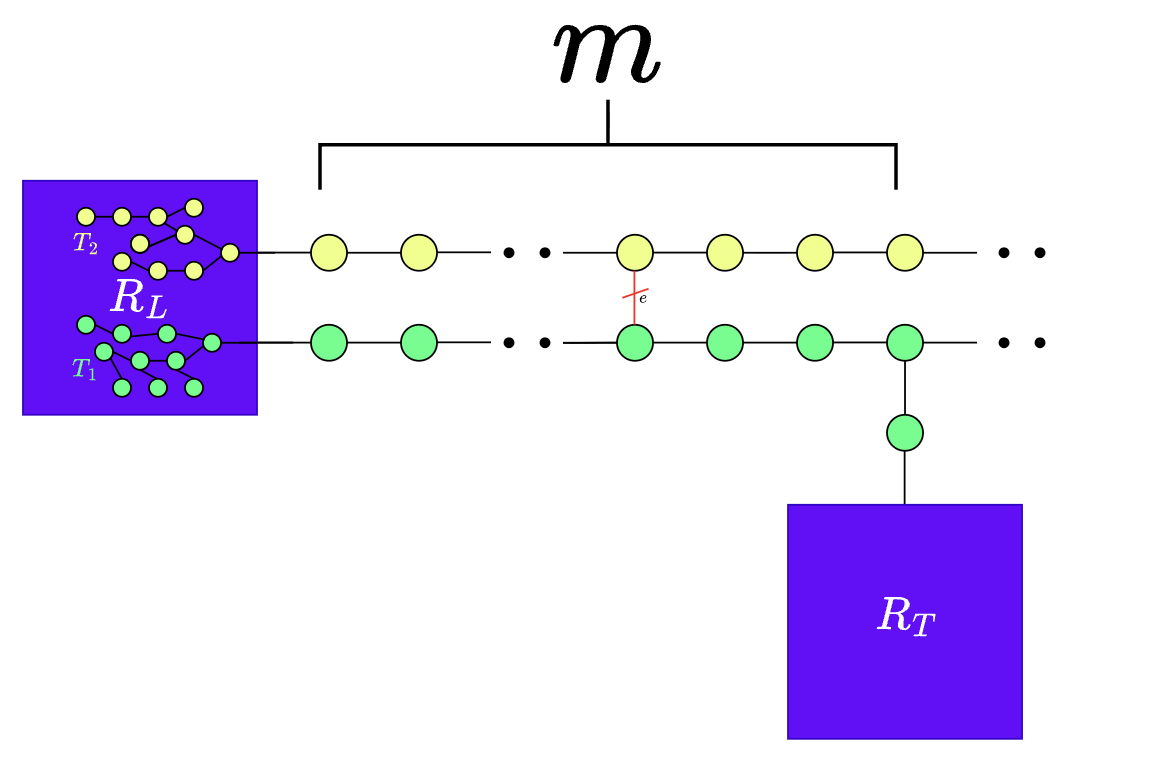}
    \caption{$T$ is drawn such that $T_1$ and $T_2$ represent a balanced partition. Note that there is only one vertical rung between the two rails and that the rung is $e$, the cut-edge that when removed produces the balanced partition.}
    \label{fig:balanced}
\end{figure}
 
Similarly, because they both traverse the highway, $T_1$ and $T_2$ must both reach the middle of the ladder. $T_1$ and $T_2$ are vertex disjoint, so the only way they can both reach the middle of the ladder is if each tree follows its horizontal rail as in Figure~\ref{fig:balanced}. That is, if there exists a rung in $T_1$ or $T_2$, then that tree would control both vertices in a column along the highway corridor, blocking the other tree from reaching the middle of the ladder (as in Figure~\ref{fig:blocked}). 
\begin{figure}[H]
    \centering
    \includegraphics[width=0.6\textwidth]{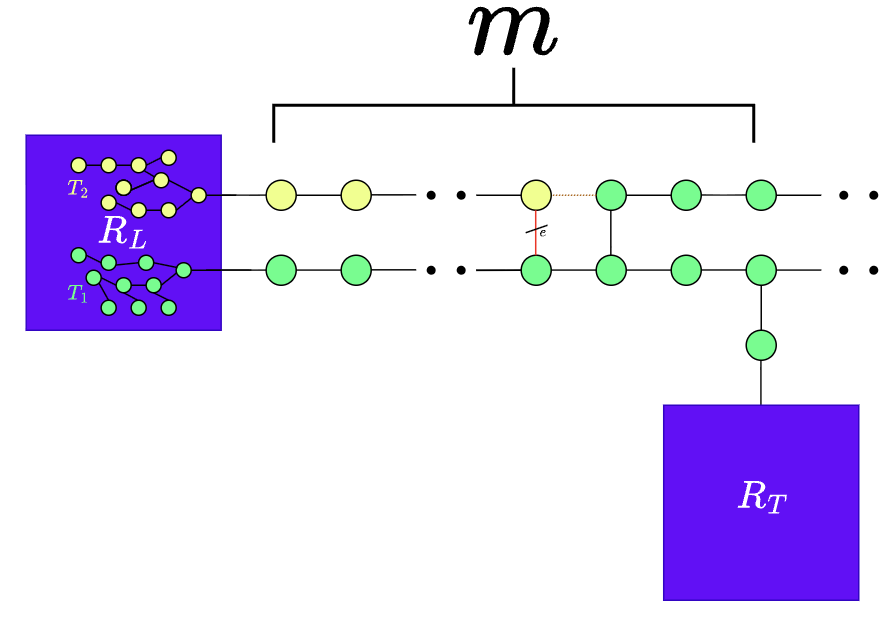}
    \caption{In this figure, there are two rungs in $T$ and consequently a vertical rung in $T_1$. As a result, $T_1$ contains both vertices in the $(m-2)$th column of the left portion of the highway. This means that $T_2$ is blocked from traversing the highway because the existence of the brown dashed edge would necessitate the connection of $T_1$ and $T_2$. As a result, no $\varepsilon$ balanced split is admitted.}
    \label{fig:blocked}
\end{figure}

However, this implies that $T_1 \cup T_2$ contains zero vertical rungs, which implies $T_1\cup T_2 \cup  e$ contains at most $1$ vertical rung. But this is a contradiction as $T_1 \cup T_2 \cup e = T$ and $|\mathcal{R}_T^L| \geq 2$.
 
The same contradiction is reached if $T_1$ and $T_2$ fall on the right side of the ladder by symmetry. So we conclude that $T_1$ and $T_2$ are not $\varepsilon$ balanced. 
\end{proof}
 
\section{Lemma: Uniform Spanning Trees Have Many Rungs in Each Half With High Probability}
\label{sec:lemma2}
 
Lemma~\ref{lem:two-rungs-bad} requires at least two rungs in each half of the ladder. We now show that a uniform spanning tree of $H_m$ satisfies this two-sided condition with high probability.
 
\begin{lemma}\label{lem:few-rungs-rare}
Let $T$ be a uniform spanning tree of $H_m$. Then
\[
\mathbb{P}\bigl[|\mathcal{R}_T^L| \leq 1 \text{ or } |\mathcal{R}_T^R| \leq 1\bigr] \;\leq\; 2\left(\tfrac{8}{9}\right)^{\lfloor m/2 \rfloor}.
\]
\end{lemma}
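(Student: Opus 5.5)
The plan is to bound the probability of each bad side separately and then take a union bound:
\[
\mathbb{P}\bigl[|\mathcal{R}_T^L|\le 1 \text{ or } |\mathcal{R}_T^R|\le 1\bigr]\le \mathbb{P}\bigl[|\mathcal{R}_T^L|\le 1\bigr]+\mathbb{P}\bigl[|\mathcal{R}_T^R|\le 1\bigr].
\]
By the left–right symmetry of the ladder (only the regions differ, and the argument will not look at them), it suffices to show $\mathbb{P}[|\mathcal{R}_T^L|\le 1]\le (8/9)^{\lfloor m/2\rfloor}$. I would group the left-half rungs (columns $1,\dots,m$) into $\lfloor m/2\rfloor$ disjoint pairs, say columns $\{2i-1,2i\}$. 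If $|\mathcal{R}_T^L|\le 1$, then no pair has both of its rungs in $T$. So it is enough to show that the probability that \emph{every} pair fails to be fully included is at most $(8/9)^{\lfloor m/2\rfloor}$.

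To get this, I would run Algorithm~\ref{alg:ust-resistance} on $H_m$, processing the left-half rungs first, pair by pair, in any fixed order, and only afterwards the remaining edges. As noted after that algorithm, the processing order does not change the joint law of which edges are kept, so the output is a uniform spanning tree. The key claim is that whenever a left-half rung $e=\{a,b\}$ is about to be processed, its current effective resistance is at least $1/3$, whatever happened to the previously processed rungs. Lemma~\ref{lem:resistance-tree} then says $e$ is kept with conditional probability at least $1/3$. Given that, the first rung of a pair is kept with probability at least $1/3$, and given that outcome the second is kept with probability at least $1/3$. So, conditional on any history of earlier pairs, the pair is fully kept with probability at least $1/9$. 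Chaining these conditional bounds across the $\lfloor m/2\rfloor$ pairs gives failure probability at most $(8/9)^{\lfloor m/2\rfloor}$.

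The main obstacle is justifying the uniform $1/3$ bound under arbitrary conditioning, which I would do with Lemma~\ref{lem:degree-bound} by showing $\deg(a)\le 3$ in the current graph. Initially every ladder vertex in the left half has degree exactly $3$: the central column is in neither half, and the end-column vertices have degree $3$ by construction. Up to the moment $e$ is processed, only other rungs have been touched.
\begin{itemize}
\item Deleting an edge only lowers degrees.
\item Contracting a different rung $\{c,d\}$ merges $c$ and $d$, which are both distinct from $a$ and $b$ because rungs are vertex-disjoint. This can turn two rail edges incident to $a$ into parallel edges between $a$ and the merged vertex, but it does not increase the number of edge incidences at $a$.
\end{itemize}
Hence $\deg(a)\le 3$ still holds, counting multi-edges, and $e$ itself is still present. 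I would check that Lemma~\ref{lem:degree-bound} remains valid for multigraphs; it should, since the proof via Rayleigh monotonicity and star-current arguments is insensitive to parallel edges. I would also observe that if $e$ has become a bridge then its resistance is $1\ge 1/3$ anyway. With this claim in place the computation is routine, and the union bound over the two sides gives the factor $2$.
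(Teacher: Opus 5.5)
Your proposal is correct and follows essentially the same route as the paper: pair the left-half rungs into $\lfloor m/2\rfloor$ disjoint column pairs, process them first in Algorithm~\ref{alg:ust-resistance}, use Lemma~\ref{lem:degree-bound} to get conditional inclusion probability at least $1/3$ per rung (hence $1/9$ per pair), chain the conditional bounds, and take a union bound over the two halves. Your check that each unprocessed rung endpoint keeps degree at most $3$ under earlier contractions, counting parallel edges, is a slightly more careful version of the paper's claim that these degrees remain $3$.
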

 
\begin{proof}
We start by noting that both halves of the $H_m$ highway in Figure~\ref{fig:highway} are composed of $\lfloor m/2 \rfloor$ consecutive disjoint 4-cycles (with each cycle formed by two rungs edges and two rail edges). We can label these cycles from left-to-right $C_{1} . . . C_{\lfloor \frac{m}{2}\rfloor}$ for each half of the ladder. 
 
Now use Algorithm~\ref{alg:ust-resistance} to sample $T$. As noted in the preliminaries, the probability $T$ is sampled is independent of the order in which edges are processed, so we can process all $2 \times 2\lfloor m/2 \rfloor$ rungs from left to right before any other edges of $H_m$.
 
Consider any of the $4$-cycles, say $C_i$, with its two rungs $r_i^{\top}$ and $r_i^{\bot}$ (which lie in distinct columns and therefore share no vertex). By Definition~\ref{def:highway}, every vertex incident to a rung of $C_i$ has degree $3$ in $H_m$. By Lemma~\ref{lem:degree-bound},
\[
R_{r_i^{\top}} \;\geq\; \frac{1}{3}, \qquad R_{r_i^{\bot}} \;\geq\; \frac{1}{3}.
\]
We sample $T$ with Algorithm~\ref{alg:ust-resistance}, processing the rungs cycle by cycle in the order $C_1, C_2, \dots$, and within each cycle processing $r_i^{\top}$ before $r_i^{\bot}$. Suppose the rungs of $C_1, \dots, C_{i-1}$ have already been processed, leaving a contracted/deleted graph $H'$. Because the cycles are pairwise vertex-disjoint, the endpoints of $r_i^{\top}$ and $r_i^{\bot}$ still have degree $3$ in $H'$, so by Lemmas~\ref{lem:resistance-tree} and~\ref{lem:degree-bound} the conditional probability that $r_i^{\top}$ is contracted into $T$ is its effective resistance in $H'$, which is at least $\tfrac13$. Now condition further on $r_i^{\top} \in T$, i.e.\ contract $r_i^{\top}$. This does change effective resistances elsewhere in the network; what matters is only that, since $r_i^{\bot}$ is vertex-disjoint from $r_i^{\top}$, its two endpoints retain degree $3$ in the contracted graph, so Lemma~\ref{lem:degree-bound} applies there and gives $R_{r_i^{\bot}} \geq \tfrac13$ in that graph. Hence the conditional probability that $r_i^{\bot} \in T$ is also at least $\tfrac13$, and
\[
\mathbb{P}\bigl[r_i^{\top} \in T \;\text{and}\; r_i^{\bot} \in T \,\big|\, \text{history through } C_{i-1}\bigr] \;\geq\; \frac{1}{9}.
\]
 
Equivalently, whatever the outcomes for the earlier cycles, the conditional probability that $C_i$ contributes at most one rung to $T$ is
\[
\mathbb{P}\bigl[|\{r_i^{\top}, r_i^{\bot}\} \cap T| \leq 1 \,\big|\, \text{history through } C_{i-1}\bigr] \;\leq\; \frac{8}{9}.
\]
 
If $|\mathcal{R}_T^L| \leq 1$, then every one of the $\lfloor m/2 \rfloor$ disjoint cycles in the left half contributes at most one rung. Multiplying the conditional bounds above along the processing order (rather than appealing to independence, which fails for spanning-tree edge events), the chain rule gives:
\[
\mathbb{P}\bigl[|\mathcal{R}_T^L| \leq 1\bigr] \;\leq\; \prod_{i=1}^{\lfloor m/2 \rfloor} \frac{8}{9} \;=\; \left(\frac{8}{9}\right)^{\lfloor m/2 \rfloor}.
\]
An identical argument applies to the right half of the ladder, so using a union bound we get that:
\[
\mathbb{P}\bigl[|\mathcal{R}_T^L| \leq 1 \;\text{or}\; |\mathcal{R}_T^R| \leq 1\bigr] \;\leq\; 2\left(\frac{8}{9}\right)^{\lfloor m/2\rfloor}. \qquad \qedhere
\]
\end{proof}

Combining Lemmas~\ref{lem:two-rungs-bad} and \ref{lem:few-rungs-rare} yields:
 
\begin{corollary}
\label{cor:no-balanced-cut}
For a uniform spanning tree $T$ of $H_m$,
\[
\mathbb{P}\bigl[\,T \text{ admits an $\varepsilon$-balanced cut}\,\bigr] \;\leq\; 2\left(\frac{8}{9}\right)^{\lfloor m/2\rfloor} .
\]
\end{corollary}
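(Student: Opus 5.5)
The corollary follows by combining Lemma~\ref{lem:two-rungs-bad} and Lemma~\ref{lem:few-rungs-rare} through an event inclusion. Let $A$ be the event that the uniform spanning tree $T$ of $H_m$ admits an $\varepsilon$-balanced cut, meaning some edge $e\in T$ whose removal leaves two components forming an $\varepsilon$-balanced 2-partition of $V(H_m)$. Let $B$ be the event that $|\mathcal{R}_T^L|\ge 2$ and $|\mathcal{R}_T^R|\ge 2$. Lemma~\ref{lem:two-rungs-bad} states that no spanning tree in $B$ admits such a cut. This is a deterministic statement about each individual tree, so $A\cap B=\emptyset$, or equivalently $A\subseteq B^c$.

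The complement $B^c$ is exactly the event $\{|\mathcal{R}_T^L|\le 1\}\cup\{|\mathcal{R}_T^R|\le 1\}$. Lemma~\ref{lem:few-rungs-rare} bounds its probability under the uniform spanning tree measure by $2(8/9)^{\lfloor m/2\rfloor}$. Monotonicity of probability then gives
\[
\mathbb{P}[A]\;\le\;\mathbb{P}[B^c]\;\le\;2\left(\tfrac{8}{9}\right)^{\lfloor m/2\rfloor},
\]
which is the claimed bound.

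Two compatibility points need checking. First, the notion of ``$\varepsilon$-balanced cut'' in the corollary should be the same as in Lemma~\ref{lem:two-rungs-bad}. Both use Remark~\ref{rem:block-mass}, which measures balance relative to $N/2$ with the same $\varepsilon\in[0,\tfrac1{12})$ fixed in Definition~\ref{def:highway}, so they agree. Second, the rung sets $\mathcal{R}_T^L$ and $\mathcal{R}_T^R$ should be defined identically in both lemmas. They are: both use the left and right halves of the ladder, excluding the central column.

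The only step that needs any care is confirming that Lemma~\ref{lem:two-rungs-bad} applies to every spanning tree, not just a typical one, since the inclusion $A\subseteq B^c$ uses it pointwise. Its statement is universally quantified over spanning trees of $H_m$, so it does apply pointwise. All the probabilistic content is contained in Lemma~\ref{lem:few-rungs-rare}.
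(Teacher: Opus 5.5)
Your proposal is correct and is the same argument as the paper. Lemma~\ref{lem:two-rungs-bad}, applied pointwise, places every tree admitting an $\varepsilon$-balanced cut inside $\{|\mathcal{R}_T^L|\le 1\}\cup\{|\mathcal{R}_T^R|\le 1\}$, and Lemma~\ref{lem:few-rungs-rare} bounds the probability of that event by $2(8/9)^{\lfloor m/2\rfloor}$.
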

 
\begin{proof}
By Lemma~\ref{lem:two-rungs-bad}, if $T$ admits an $\varepsilon$-balanced cut then $|\mathcal{R}_T^L| \leq 1$ or $|\mathcal{R}_T^R| \leq 1$. Apply Lemma~\ref{lem:few-rungs-rare}.
\end{proof}
\section{The Adversarial 3-Partition}
\label{sec:partition}
 
We now embed the highway graph as the merge of two adjacent districts in a 3-partition of a planar square grid graph.
 
\begin{theorem}
\label{thm:main}
There exists an infinite family $\{(G_n, \mathcal{P}_n)\}_{n=1}^{\infty}$ of 
 planar square grid graphs with $\varepsilon$-balanced $3$-partitions, for any fixed balance tolerance $\varepsilon \in [0, \tfrac{1}{12})$, such that, writing $m = m(n) = \Theta(n)$ for the half-length of the ladder in the construction, the expected 
number of steps for \textsf{ReCom} to leave $\mathcal{P}_n$ is at least $exp(n)$
\end{theorem}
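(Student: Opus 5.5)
I will prove the theorem by an explicit construction that exploits the three-fold symmetry suggested by the highway graph. The planar square grid graph $G_n$ will consist of three large dense square blocks $B_1, B_2, B_3$, say $n \times n$ grids, placed at the corners of a roughly equilateral arrangement. Between each pair of blocks runs a $2 \times (2m+1)$ ladder of grid cells. Each district will be $R_i^n = B_i$ together with ``its share'' of the corridors. I will choose the assignment of corridor vertices cyclically: district $i$ owns one rail of the ladder to block $i+1$ and the other rail of the ladder to block $i-1$. The tendril is realized by a thin one-vertex-wide path from the middle of each ladder to the third block. Alternatively, and more simply, I will attach a short spur, so that whenever two districts merge, the third block hangs off the middle column.

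The first real step is to specify the vertex sets so that the union $R_i \cup R_j$, for each adjacent pair, induces a subgraph isomorphic, or at least structurally equivalent, to a highway graph $H_m$ in the sense of Definition~\ref{def:highway}. This means the merged region must consist of two dense regions joined by a $2\times(2m+1)$ ladder whose internal vertices all have degree $3$ in the merged induced subgraph, plus a single-edge tendril to a third dense piece. For the tendril I plan to let each district contain a small appendage block $R_T$ reached through a single vertex, placed at the midpoint of the opposite ladder. I must check the following points. Each $G[R_i]$ is connected. The three districts have equal size up to $\varepsilon$; this is arranged by symmetry, with equal sizes exactly. The size condition $|V(R_\cdot)| < (\tfrac12-\varepsilon)N - (4m+3)$ holds for the merged graph of size $N \approx 2N_0/1$. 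Since each of the three regions inside the merged graph has well under half its mass when blocks have comparable size and $\varepsilon < 1/12$, this is where the $1/12$ slack gets used. Finally, I must confirm that the merged graph embeds in the square grid, i.e.\ that $G_n$ is a grid subgraph with $m = \Theta(n)$.

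Given that, the probabilistic step is routine. With probability $1$ the chosen pair is either nonadjacent, which stalls, or forms a highway graph. In the latter case Corollary~\ref{cor:no-balanced-cut} gives a success probability of at most $2(8/9)^{\lfloor m/2\rfloor}$. The acceptance step only lowers the leaving probability further. Hence each step leaves $\mathcal{P}_n$ with probability $p \le 2(8/9)^{\lfloor m/2\rfloor}$. The number of steps until leaving therefore stochastically dominates a geometric random variable with mean $1/p \ge \tfrac12 (9/8)^{\lfloor m/2\rfloor} = \exp(\Omega(n))$, since $m = \Theta(n)$.

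The main obstacle is the geometric step: realizing all three pairwise merges as genuine highway graphs simultaneously inside a planar square grid. The difficulties are that every ladder vertex must have degree exactly $3$ within the merged region, that the third piece must attach via a single tendril vertex at the central column, and that a grid needs ladders that bend. I would handle bends by noting that Lemmas~\ref{lem:two-rungs-bad} and~\ref{lem:few-rungs-rare} only use the ladder's combinatorial structure. This structure is disjoint 4-cycles with degree-$3$ endpoints and two disjoint rails, which survives straight grid ladders. I would therefore first try a ``Y''-shaped layout in which each district is a block plus a straight ladder arm, with blocks sized so the mass conditions of Remark~\ref{rem:block-mass} hold.
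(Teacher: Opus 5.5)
\textbf{The construction fails, and it is the core of the theorem.} Your probabilistic wrap-up is fine and matches the paper's; the paper also gains a factor $\tfrac{2}{3}$ from the nonadjacent pair. The gap is the construction itself: the layout you sketch does not produce highway graphs.

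Definition~\ref{def:highway} requires each of $R_L,R_R,R_T$ to have fewer than $(\tfrac{1}{2}-\varepsilon)N-(4m+3)$ vertices, where $N=|R_i\cup R_j|$ is \emph{twice} a district. Your claim that each region is ``well under half its mass'' rests on picturing all three blocks inside the merged graph. But the third block $B_k$ belongs to the third district and is absent from $G[R_i\cup R_j]$, so it cannot be the tendril's target. Likewise, an appendage of district $i$ placed at the midpoint of the \emph{opposite} ladder is absent from that ladder's merge.

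The piece hanging off the central column must belong to one of the two merging districts, say $i$. In your layout every district is a block plus thin pieces attached to that block, so all of district $j$ except its rail of the shared ladder lies in a single region of size about $N/2-O(m)$. This violates the bound for every fixed $\varepsilon>0$ and large $n$. The failure is fatal, not technical. If the merge really has highway structure, that region meets the rest only through the two end-column edges. The end rung, these two edges and one block edge form a $4$-cycle, so each of these edges has effective resistance at most $\tfrac{3}{4}$. Hence with probability at least $\tfrac{1}{4}$ a uniform spanning tree uses only one of them. Deleting it gives an $\varepsilon$-balanced split once $2m+1\le \varepsilon N/2$, so \textsf{ReCom} leaves your state in $O(1)$ expected steps. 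The same objection applies to your fallback ``Y''-shaped layout, where each district is again a block plus an arm.

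\textbf{The missing idea: districts must straddle the shared ladder.} Each district's mass must lie on both sides of the shared ladder. Then every region of a merge is bounded away from $N/2$, and any balanced split is forced to run both trees along the ladder. The paper's layout does this. It uses columns of relative widths $\tfrac{1}{4},\tfrac{1}{6},\tfrac{1}{6},\tfrac{1}{4}$, and $R_1^n$ is an outer column joined by a thin bridge to an arm of mass $\tfrac{1}{12}-\varepsilon$ that wraps a column of $R_2^n$. Consequently $R_1^n\cup R_2^n$ decomposes into regions of mass about $\tfrac{1}{4},\tfrac{1}{4},\tfrac{1}{6}$ of the total, with no district confined to one region.

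The paper also makes $R_1^n$ and $R_3^n$ nonadjacent, so only two mirror-image merges need to be highway graphs. Your cyclic design needs all three merges to be highway graphs at once. It also needs every rail vertex kept at degree $3$ in each merged induced subgraph of the grid, and you leave this unresolved as well.
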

\begin{proof}
We construct $G_n$ and the partition $\mathcal{P}_n = \{R_1^n, R_2^n, R_3^n\}$ as illustrated in Figure~\ref{fig:partition}. The picture is a vertically-symmetric grid graph divided into four rectangular columns of widths roughly proportional to $\frac{1}{4}, \frac{1}{6}, \frac{1}{6}, \frac{1}{4}$ of the total mass (with $\varepsilon$-corrections), arranged so that:
\begin{itemize}
    \item $R_2^n$ (green) consists of the two narrow middle columns as well as a long thin tendril wrapping and dividing $R_1^n$ from $R_3^n$. 
    \item $R_1^n$ (brown) and $R_3^n$ (blue) each consist of one wide outer column connected via a thin bridge to an arm of size $\frac{1}{12} - \varepsilon$ that wraps one of the columns of $R_2^n$.
\end{itemize}
 
\begin{figure}[h]
    \centering
    \includegraphics[width=0.85\textwidth]{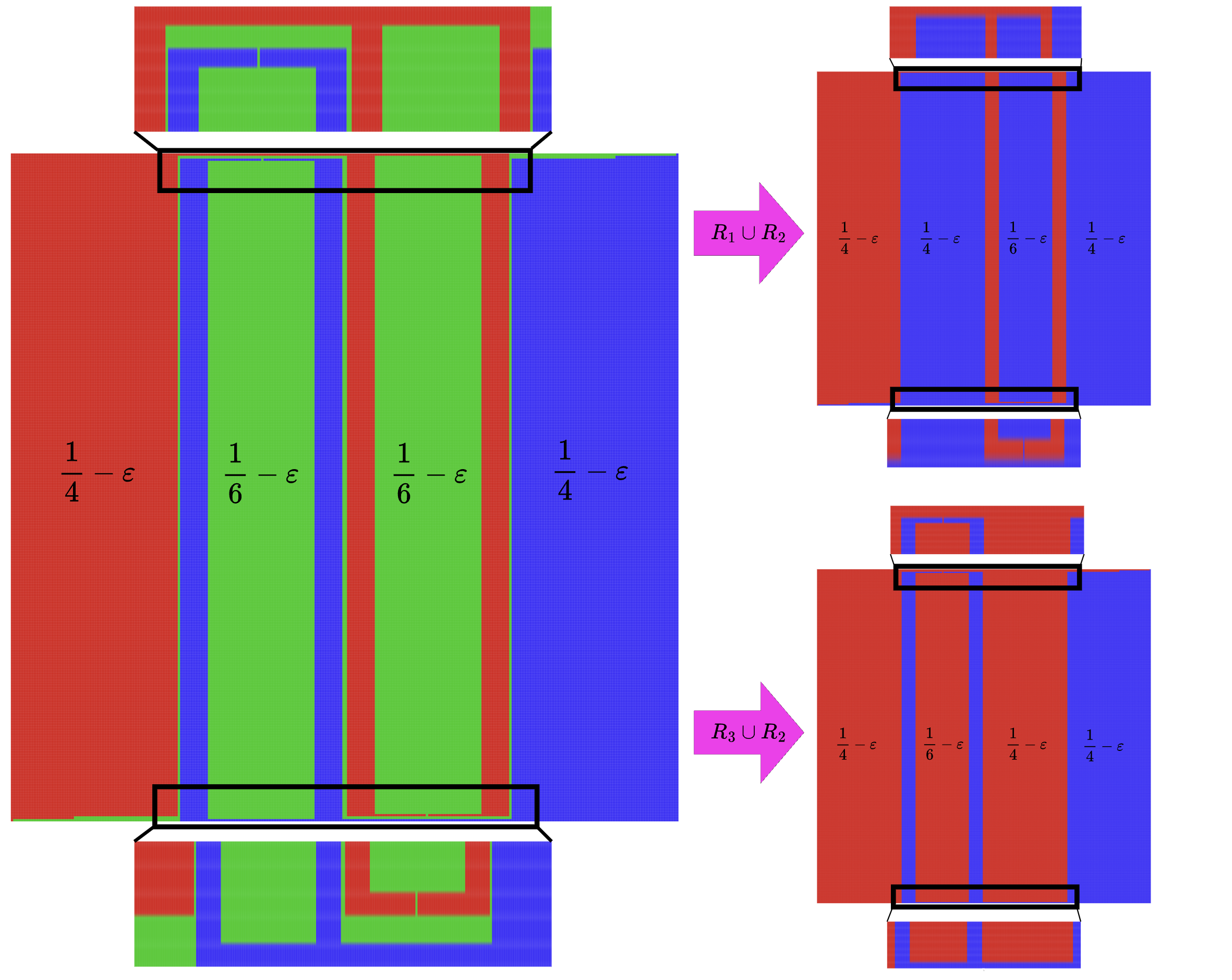}
    \caption{On the left, the 3-partition $\mathcal{P}_n = \{R_1^n, R_2^n, R_3^n\}$ on the $264 \times 264$ grid graph $G_n$, colored in brown, green, and blue respectively. Each region has $23232$ vertices, $\frac{1}{3}$ of the mass. On the right, the highway graphs created by the labeled merges. The black rectangles inscribe regions that are shown scaled and zoomed-in; these zoomed views are included to make it easier to identify the ``ladder'' portions of the graphs.}
    \label{fig:partition}
\end{figure}

By construction, there are only two possible merges in this graph because $R_1^n$ is nowhere adjacent to $R_3^n$. Moreover, both merges are highway graphs. To see this, consider the merge $R_1^n \cup R_2^n$. The merge consists of two wide outer regions of mass $\frac{1}{4} - \varepsilon$ and one wide inner region of mass $\frac{1}{6} - \varepsilon$. The two wide outer regions are joined by a $2 \times (2m+1)$ ladder, which is also connected to the inner region by a single vertex. If we denote the wide outer regions on the left and right with $R_L$ and $R_R$, and the inner region with $R_T$, the result is the highway graph $H_{m}$ of Definition~\ref{def:highway}. To see this, note that each of the three regions has at most $\tfrac13 N$ vertices, while the highway corridor contributes only $2(2m+1)+1 = \Theta(n)$ vertices --- an $O(1/n)$ fraction of the total mass $N = \Theta(n^2)$. Hence, for large $n$, each region together with the corridor still satisfies the size bound of Definition~\ref{def:highway} (since $\tfrac13 N + \Theta(n) < (\tfrac12 - \varepsilon)N$ once $N$ is large, using $\varepsilon < \tfrac{1}{12}$).

Similarly, because the initial construction is symmetric, the merge $R_2^n \cup R_3^n$ produces a symmetric highway graph.
 
Now note that \textsf{Reversible ReCom}'s transition function picks one of the three unordered pairs of pieces uniformly at random. With probability $1/3$ it picks $\{R_1^n, R_3^n\}$, which is non-adjacent, and stalls. With probability $2/3$ it picks an adjacent pair, in which case Corollary~\ref{cor:no-balanced-cut} bounds the conditional probability of finding a balanced cut by $2 \left(\tfrac{8}{9}\right)^{\lfloor m/2 \rfloor}$. Therefore the probability that any step succeeds in proposing a non-trivial move is at most
\[
\frac{2}{3} \cdot 2\left(\tfrac{8}{9}\right)^{\lfloor m/2 \rfloor} \;=\; \frac{4}{3} \left(\tfrac{8}{9}\right)^{\lfloor m/2 \rfloor}.
\]
Even ignoring the further reduction from the reversibility correction $1/E(P_i', P_j')$, this is exponentially small in $m$, and the expected number of outer iterations until the state changes is the reciprocal of this probability, namely $\Omega\!\left( \left(\tfrac{9}{8}\right)^{\lfloor m/2 \rfloor}\right)$.
\end{proof}
\section{Discussion}
 
Theorem~\ref{thm:main} establishes that, in the worst case over starting states, the \textsf{ReCom} transition function does not run in expected polynomial time with respect to $m$. In the above construction the variable $m$ grows with $n$, the width of the grid graph, so we can say that the \textsf{ReCom} transition function does not run in polynomial time with respect to the dimensions of the grid graph. 
 
This affirmatively answers the first half of Problem 2.8 \cite{aimpl}. The chain \emph{can} get stuck in configurations where a uniform random spanning tree of the merge admits a balanced cut only with exponentially small probability. The second half of Problem 2.8 --- which asks whether such ``bad'' configurations are exponentially unlikely under the chain's stationary distribution, or alternatively, exponentially unlikely to be reached from a typical starting state --- is left open.

We note two features of our construction that may bear on future work. First, the bad 3-partitions $\mathcal{P}_n$ are themselves moderately compact in the cut-edge sense, so compactness alone does not rule out adversarial configurations, even though the spanning-tree distribution favors compact partitions \cite{cannon2022}. Second, our graphs $G_n$ are grid graphs, which means that \textsf{ReCom} difficulty is not just an artifact of pathological planar geometry.
 
A natural next step would be to bound the probability that the \textsf{ReCom} markov chain, started from the stationary distribution of an $n \times n$ grid, ever reaches a configuration of the type produced by Theorem~\ref{thm:main}.

Finally, while we have found a family of adversarial configurations for \textsf{ReCom}, it should be noted that there are other related MCMC algorithms used for sampling districts which would not have to rediscover the thin path through the highway graph and would hence not be obstructed by this construction (e.g. the Balanced Up-Down Walk\cite{akitaya2026balancedupdownwalk}). 

\section*{Acknowledgements}
The author extends sincere thanks to Jamie Tucker-Foltz for introducing them to this problem and for providing invaluable mentorship and guidance throughout the preparation of this result.
\newpage
\appendix
\section{Reversible ReCom Pseudocode}
\label{app:recom}
 
\begin{algorithm}[h]
\caption{Reversible ReCom (one step) \cite{cannon2022}.}
\label{alg:recom}
\begin{algorithmic}[1]
\Require A connected planar graph $G$ on $kn$ vertices, an $\varepsilon$-balanced $k$-partition $\mathcal{P} = \{P_1, \dots, P_k\}$.
\State Select uniformly at random an unordered pair $\{P_i, P_j\}$.
\If{$P_i$ and $P_j$ are not adjacent in $G$}
    \State \Return $\mathcal{P}$
\EndIf
\State Sample a uniform random spanning tree $T$ of $G[P_i \cup P_j]$.
\State Run a depth-first search on $T$ to find an edge $e$ such that the two components of $T - e$ each have size $n$ (within $\varepsilon$ tolerance).
\If{no such $e$ exists}
    \State \Return $\mathcal{P}$
\EndIf
\State Label the two components of $T - e$ as $P_i', P_j'$.
\State Let $E(P_i', P_j')$ be the number of edges of $G$ with one endpoint in $P_i'$ and one in $P_j'$.
\State With probability $1/E(P_i', P_j')$, replace $\{P_i, P_j\}$ in $\mathcal{P}$ with $\{P_i', P_j'\}$.
\State \Return $\mathcal{P}$
\end{algorithmic}
\end{algorithm}

\bibliographystyle{plain}
\bibliography{refs}
 
\end{document}